\newtheorem{theorem}{Theorem}
\newtheorem{lemma}[theorem]{Lemma}
\definecolor{light_gray}{RGB}{225,225,225}
\newcommand{\tikzmark}[1]{\tikz[overlay,remember picture] \node (#1) {};}
\newcommand*{\addNote}[5]{%
    \tikz[remember picture,overlay]{
        \node (x) at ($(#3)+(0.3,0)$) {};
        \node (y1) at ($(#1)+(0,0.5\baselineskip)$) {};
        \node (y2) at ($(#2)-(0,0.25\baselineskip)$) {};
        \draw [decoration={brace,amplitude=#4},decorate,ultra thick]
            (x |- y1) -- (x |- y2)
                node [align=center, text width=1cm, pos=0.5, anchor=west] {#5};
    }
}%
\newcommand{\bigbox}[3]{%
    \tikz[remember picture,overlay]{
        \node (x2) at ($(#2)+(0.1,0)$) {};
        \node (y1) at ($(#3)-(0,0.75\baselineskip)$) {};
        \node (y2) at ($(#3)-(0,4.25\baselineskip)$) {};
        \node[yshift=2,fill=light_gray,fit={(#1 |- y1)(x2 |- y2)}] {};
    }
    \ignorespaces
}%
\newcommand{\smallbox}[3]{%
    \tikz[remember picture,overlay]{
        \node (x2) at ($(#2)+(0.1,0)$) {};
        \node (y1) at ($(#3)-(0,0.75\baselineskip)$) {};
        \node (y2) at ($(#3)-(0,1.25\baselineskip)$) {};
        \node[yshift=2,fill=light_gray,fit={(#1 |- y1)(x2 |- y2)}] {};
    }
    \ignorespaces
}%
\begin{document}

\title{A Simple Balanced Search Tree with\\
No Balance Criterion}

\author{Tae Woo Kim}
\date{
School of Computing\\
Korea Advanced Institute of Science and Technology\\
travis1829@kaist.ac.kr
}

\maketitle

\begin{abstract}
We present a method that maintains a balanced binary search tree without using any tree balance criterion at all, with the ultimate aim of maximum simplicity. In fact, our method is highly intuitive, and we only need to add minimal extra code and a simple partial-rebuilding algorithm to a naive binary search tree. Our method will be suitable as a highly simple and short solution when amortized logarithmic costs are enough.
\end{abstract}

\section{Introduction}

The balanced binary search tree (or balanced BST) is one of the most fundamental data structures. Though there are many ways to implement the dictionary, we use the balanced BST when we need logarithmic worst-case costs. Logarithmic worst-case cost is the key feature of balanced BSTs, and it distinguishes a balanced BST from a naive BST, where the former uses a tree balancing method to guarantee logarithmic costs. Common examples of a balanced BST include AVL trees \cite{Adelson-VelskyAVL}, binary B-trees \cite{BayerBB} or its simplified variant \cite{AnderssonAA}, weight-balanced trees \cite{NievergeltWB} or its variant \cite{Overmars:1987:DDD:535500}, and general balanced trees \cite{AnderssonGBT}.

However, it is a pity that such a fundamental data structure is infamous for being cumbersome to implement. To cite Munro, Papadakis, and Sedgewick \cite{MunroDSL}, balanced BSTs use a notorious balancing algorithm that needs to consider numerous cases, and hence, implementations are said to be too complicated for average programmers. Not to mention that balanced BSTs are often neglected, or to cite Andersson \cite{AnderssonAA}, they are often replaced by poor methods instead. Andersson \cite{AnderssonGBT} also said in another paper that many of the commonly used balanced BSTs use a complicated balance criterion to restrict trees and detect imbalance. Also, note that especially nowadays, computer science is needed and used by so many people from all kinds of fields. These strongly emphasize the importance of a simple and explicit implementation.

Fortunately, we have thankful research related to this well-known issue. People searched for a simpler balance criterion. For example, Arne Andersson \cite{AnderssonAA} used bookmarking to reduce the cases that need to be considered in binary B-trees \cite{BayerBB}, and this led to a generalized tree balance criterion with generalized rotation procedures. Also, the general balanced tree \cite{AnderssonGBT} showed that we only need a global balance criterion instead of more complicated ones, and it is mentioned that this maybe-simplest criterion is all we need if amortized logarithmic costs are enough.

However, we go further than trying to get a simpler balance criterion and show that we actually do not need any balance criterion at all if (again) amortized logarithmic costs are enough. Importantly, the ultimate aim is a tree balancing method with maximum simplicity, and we will show that our method has a highly simplified concept and implementation. This is obtained from the following observations that we will explain later in detail.
\begin{enumerate}[$\bullet$]
\item The concept of ``detecting imbalance and then appropriately rebalancing'' can be replaced by ``scheduled rebuilding.''
\item Scheduling rebuilds can be easily done by bookmarking $O(\log n)$ bits on each node and adding simple, minimal code to common implementations of tree procedures.
\item We can maintain a tree using the exact same balancing method for both inserts and deletes.
\item By using tree rebuilds instead of tree rotations, we can offer a more abstract and simple balancing method. Also, rebuilds can be easily implemented by only using a basic tree traversal and a simple recursive algorithm.
\end{enumerate}

In Section~\ref{sec:operations}, we explain our method and see how we can implement our method by only adding minimal code and a simple partial-rebuilding algorithm to a naive BST. Next, in Section~\ref{sec:proof}, we show that our tree has a logarithmic height, and that insert/delete operations have an amortized logarithmic cost. Since we do not use a balance criterion, our tree's shape is flexible, and hence, we use a different analysis compared to other most trees. Finally, in Section~\ref{sec:discussions}, we discuss our tree.

\section {Notations and Notes}
In this paper, we use the following notations.

For a node $node$,
\begin{enumerate}[$\bullet$]
    \item $node.key$ is the key stored at $node$.
    \item $node.left$ is the left child node of $node$.
    \item $node.right$ is the right child node of $node$.
    \item $node.timer$ is the \emph{timer}\footnote{Definition of a \emph{timer} is explained in Section~\ref{sec:operations}.} stored at $node$. 
\end{enumerate}

Also, for a subtree $T$, $T.size$ is the size of $T$.

Note that in the following, when we use the word ``tree,'' we refer to the entire tree, not just a part of it.

\section{The Balancing Method}
\label{sec:operations}
Similar to general balanced trees or the variant of weight-balanced trees \cite{Overmars:1987:DDD:535500}, our tree occasionally rebuilds a subtree. Here, rebuilding a subtree means balancing it into a perfectly balanced one, and this is explained more in Section~\ref{sec:rebuild}. In these trees, we rebuild a subtree when it does not satisfy the balance criterion. However, in our method, we schedule rebuilds beforehand instead.

As a start, we briefly explain our method in one sentence.
\newline\newline
\textit{After rebuilding a subtree of size $n$, we rebuild it again after inserting or deleting a node in it $\lfloor kn \rfloor$ times, where $0<k<1$ is a constant.}\newline

As a simple way to implement our method, we let each subtree have a \emph{timer} integer stored at its root to schedule rebuilds. The \emph{timer} of a subtree stores the remaining number of insertions/deletions until rebuilding it. When we rebuild a subtree, we reset the \emph{timer} for all of its nodes. Specifically, for the root of a subtree of size $n$, we reset the \emph{timer} to $\max\left(1,\left\lfloor kn\right\rfloor\right)$. Note that we use $\max\left(1,\left\lfloor kn\right\rfloor\right)$ instead of just $\lfloor kn \rfloor$ to make the \emph{timer} be at least 1. As its name tells, we decrease the \emph{timer} by 1 each time we insert or delete a node in the subtree, and if the \emph{timer} reaches 0, we rebuild the subtree. Additionally, for newly inserted nodes, we set the \emph{timer} to 1.

After inserting or deleting a node, we decrease the \emph{timer} of some nodes, and then, we may have more than one node whose \emph{timer} reached 0. In that case, we only need to rebuild the subtree rooted by the one with the least depth. We can easily see that such subtree will include all such nodes and their rooted subtrees.

Using these \emph{timer} values, we just need to do the following.
\begin{enumerate}[$\bullet$]
    \item After inserting or deleting a node
    \begin{enumerate}[--]
        \item Decrease \emph{timer} by 1 for all ancestor nodes.
        \item Then, between ancestor nodes whose \emph{timer} is 0, find the one with least depth.
    \end{enumerate}
    \item During a subtree's rebuild
    \begin{enumerate}[--]
        \item For all of its nodes, reset the node's \emph{timer}.
    \end{enumerate}
\end{enumerate}

In the following, we will call that an insert/delete operation was ``successful'' if it actually inserted/deleted a node.

Now, in the following Section~\ref{sec:insert} to \ref{sec:rebuild}, we see how inserts, deletes, and rebuilds are implemented in our method. We focus on the small changes we add to common implementations of them in naive BSTs, and these changes are also marked on the example code.
\subsection {Insert}\label{sec:insert}
Common implementations of insert operations in naive BSTs use a recursive top-down traversal, briefly explained in the following steps (1) to (3).
\begin{enumerate}[(1)]
 \item Starting from the root, recursively visit nodes in order of increasing depth to find where to insert the node.
 \item Actually insert a new node, or discover that a node with the same key exists.
 \item Go back up the tree, revisiting nodes previously visited but in opposite order.
\end{enumerate}

Insert operations in our method starts from here but adds very small changes. We add a variable \emph{rebuildTarget}, which will store the root of the subtree we want to rebuild. Also, we only change step (3) by adding a short, simple code. Algorithm~\ref{alg:insert} shows the resultant implementation with the change to step (3) marked as (a) and explained in the following.

\begin{algorithm}[ht]
\textsc{Insert}($root, key, result$)
\caption{Insert}
\label{alg:insert}
\begin{algorithmic}
    \IF{$root = null$}
        \STATE $result \leftarrow$ true
        \RETURN new node with key $key$
    \ELSIF{$key < root.key$}
        \STATE $root.left \leftarrow$ \textsc{Insert}($root.left, key, result$)
    \ELSIF{$key > root.key$}
        \STATE $root.right \leftarrow$ \textsc{Insert}($root.right, key, result$)\tikzmark{right1}
    \ELSE
        \STATE $result \leftarrow$ false
        \RETURN $root$
    \ENDIF
    \STATE \tikzmark{left1}\tikzmark{ahead1}\bigbox{left1}{right1}{ahead1}
    \IF{\tikzmark{top1}$result =$ true}
        \STATE $root.timer \leftarrow root.timer - 1$
        \IF{$root.timer = 0$}
            \STATE $rebuildTarget \leftarrow root$\tikzmark{bottom1}
        \ENDIF
    \ENDIF
    \RETURN $root$
\end{algorithmic}
\addNote{top1}{bottom1}{right1}{0.5em}{(a)}
\end{algorithm}

\begin{enumerate}[(a)]
 \item We first check whether the insert operation was successful. If it was, then decrease the \emph{timer} of the current node by one, and if the \emph{timer} is now zero, then set \emph{rebuildTarget} to this node.
\end{enumerate}

Note that the extra code is very explicit in meaning and purpose.

Figure~\ref{fig:insert} shows an example of inserting an integer key ``1'' to a tree, with \emph{timer} abbreviated to \emph{t}. (i) shows the tree after steps (1) and (2), and (ii) shows the tree after our modified step (3), with \emph{rebuildTarget} marked as a black node.

\begin{figure}
        \begin{subfigure}[(i)]{0.5\textwidth}
                \includegraphics[width=\linewidth]{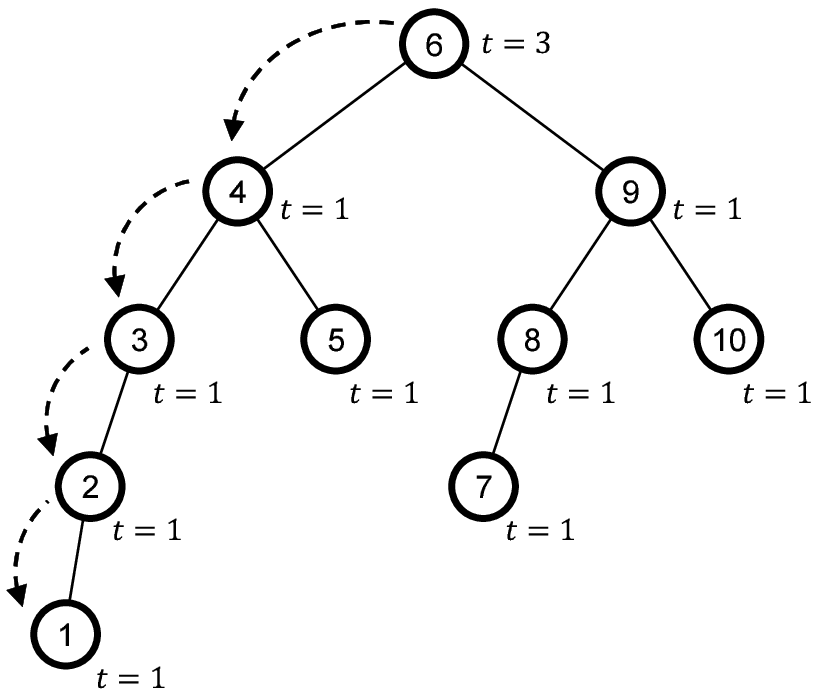}
                \caption{New node with key ``1'' inserted.\newline}
                \label{fig:insert1}
        \end{subfigure}%
        \begin{subfigure}[(i)]{0.5\textwidth}
                \includegraphics[width=\linewidth]{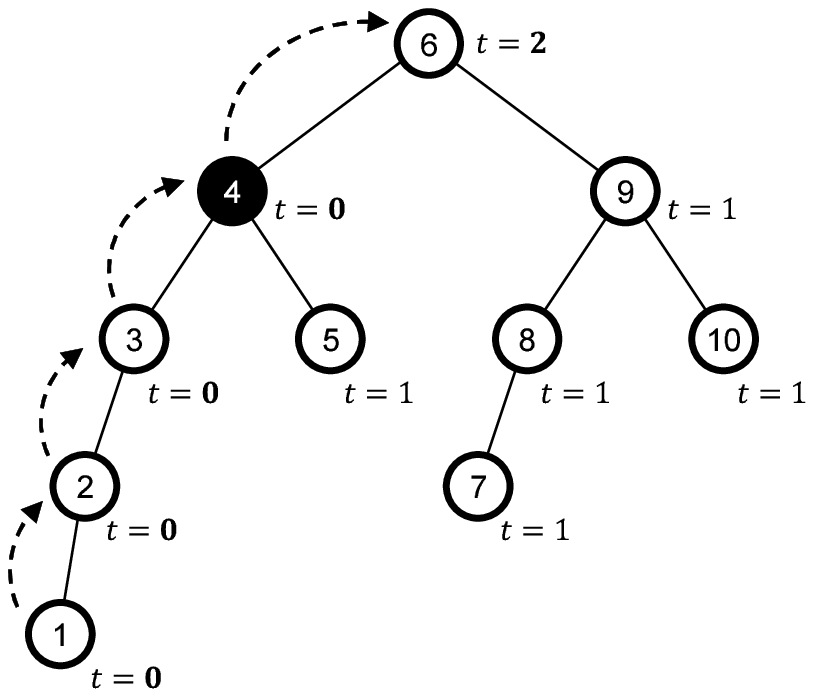}
                \caption{\emph{timer} values decreased and\newline \emph{rebuildTarget} designated.}
                \label{fig:insert2}
        \end{subfigure}%
        \caption{Inserting ``1'' in a tree.}\label{fig:insert}
\end{figure}

Additionally, other methods such as passing \emph{rebuildTarget} as a second return value can be used instead if preferred.

After designating the \emph{rebuildTarget} node, we just need to rebuild its rooted subtree.

\subsection {Delete}
For delete operations, we only need the exact same changes we added in insert operations. As in insert operations, common implementations of delete operations in naive BSTs also use a recursive top-down traversal, briefly explained in the following steps (1) to (3).
\begin{enumerate}[(1)]
 \item Starting from the root, recursively visit nodes in order of increasing depth to find the node we want to delete. If the node is found but has two child nodes, exchange it into a problem of deleting its inorder successor and continue.
 \item Actually delete a node that has zero or one child nodes, or discover that a node with such key does not exist.
 \item Go back up the tree, revisiting nodes previously visited but in opposite order.
\end{enumerate}

Delete operations in our method also starts from here and adds the identical changes we did in insert operations. We add the variable \emph{rebuildTarget} and only change step (3) by adding the exact same code. Algorithm~\ref{alg:delete} shows the resultant implementation with the change to step (3) marked as (a) and explained in the following.

\begin{algorithm}
\textsc{Delete}($root, key, result$)
\caption{Delete}
\label{alg:delete}
\begin{algorithmic}
    \IF{$root = null$}
        \STATE $result \leftarrow$ false
        \RETURN $root$
    \ELSIF{$key < root.key$}
        \STATE $root.left \leftarrow$ \textsc{Delete}($root.left, key, result$)
    \ELSIF{$key > root.key$}
        \STATE $root.right \leftarrow$ \textsc{Delete}($root.right, key, result$)\tikzmark{right}
    \ELSE
        \IF{both $root.left$ and $root.right$ is not $null$}
            \STATE $root.key \leftarrow$ \textsc{GetMin}($root.right$)$.key$
            \STATE $root.right \leftarrow$ \textsc{Delete}($root.right, root.key, result$)\tikzmark{right2}
        \ELSE
            \STATE $result \leftarrow$ true
            \IF{$root.left \neq null$}
                \RETURN $root.left$
            \ENDIF
            \RETURN $root.right$
        \ENDIF
    \ENDIF
    \STATE \tikzmark{left2} \tikzmark{ahead2} \bigbox{left2}{right2}{ahead2}
    \IF{\tikzmark{top2}$result =$ true}
        \STATE $root.timer \leftarrow root.timer - 1$
        \IF{$root.timer = 0$}
            \STATE $rebuildTarget \leftarrow root$\tikzmark{bottom2}
        \ENDIF
    \ENDIF
    \RETURN $root$
\end{algorithmic}
\addNote{top2}{bottom2}{right2}{0.5em}{(a)}
\end{algorithm}

\begin{enumerate}[(a)]
 \item We first check whether the delete operation was successful. If it was, then decrease the \emph{timer} of the current node by one, and if the \emph{timer} is now zero, then set \emph{rebuildTarget} to this node.
\end{enumerate}

After designating the \emph{rebuildTarget} node, we just need to rebuild its rooted subtree, just as in insert operations.

\subsection {Partial rebuild}
\label{sec:rebuild}
Some trees maintain balance by using \emph{partial rebuilds}, and this method was introduced by Overmars and van Leeuwan \cite{Overmars:1987:DDD:535500, Overmars:1982:DMD:2697726.2697937}. A partial rebuild rebalances a subtree into a perfectly balanced one, where we call a subtree is perfectly balanced if the size of a node's left subtree and right subtree differ by one at most for every node in it. Various partial rebuilding algorithms exist. \cite{BentleyMultiBST, Chang84, Martin72, Stout86}. In our case, an insert/delete operation may include a partial rebuild.

Just as tree rotations, partial rebuilds are not necessarily included in naive BSTs. However, we have a simple partial rebuilding algorithm that uses only a basic tree traversal and a simple recursive algorithm. By its simplicity, this method is commonly used, and the following is a brief explanation of it in steps (1) to (2).
\begin{enumerate}[(1)]
 \item Copy all nodes of the subtree to a linear array in sorted order. This is done by an in-order traversal over the subtree. At each visit, simply push the node at the back of the linear array.
 \item Make a perfectly balanced tree using the linear array in a divide-and-conquer sense. This is done by a recursive algorithm that returns the median node of the array after setting that node's left/right child to the return value of the recursive call for the left/right half of the array excluding the median.
\end{enumerate}

Not only is this common method simple to implement, but it also only needs a very small change. During (2), when we visit each node, we just need to reset the node's \emph{timer}. Algorithm~\ref{alg:rebuild1} to \ref{alg:rebuild2} is an example implementation, with (a) the part that resets the \emph{timer}.

\begin{algorithm}
\textsc{CopyToArray}($node, array$)
\caption{Copy all nodes to a linear array.}
\label{alg:rebuild1}
\begin{algorithmic}
    \IF{$node \neq null$}
        \STATE \textsc{CopyToArray}($node.left, array$)
        \STATE $array.push(node)$
        \STATE \textsc{CopyToArray}($node.right, array$)
    \ENDIF
\end{algorithmic}
\end{algorithm}

\begin{algorithm}
\textsc{BuildTree}($array, begin, end$)
\caption{Build a perfectly balanced tree from a linear array.}
\label{alg:rebuild2}
\begin{algorithmic}
    \IF{$begin > end$}
        \RETURN $null$
    \ENDIF
    \STATE $m \leftarrow \left\lfloor (begin+end)/2\right\rfloor$
    \STATE $root \leftarrow array[m]$
    \STATE $root.left \leftarrow$ \textsc{BuildTree}($array, begin, m-1$)\tikzmark{right3}
    \STATE \tikzmark{left3}\tikzmark{ahead3}$root.right \leftarrow$ \textsc{BuildTree}($array, m+1, end$)\smallbox{left3}{right3}{ahead3}
    \STATE \tikzmark{top3}$root.timer \leftarrow \max\left(1, \left\lfloor k\cdot(end-begin+1)\right\rfloor\right)$\tikzmark{bottom3}
    \RETURN $root$
\end{algorithmic}
\addNote{top3}{bottom3}{right3}{0.2em}{(a)}
\end{algorithm}
Many other partial rebuild algorithms also only need a very small change.

\section {Proof of Logarithmic Height and Amortized Logarithmic Costs}
\label{sec:proof}
In this section, we will show that the tree has logarithmic height, which means that search operations or unsuccessful insert/delete operations have logarithmic costs. Also, we will show that successful insert/delete operations have amortized logarithmic costs.

\subsection{Logarithmic height}
Many trees that use partial rebuilding maintain logarithmic height by using a constant lower bound (or upper bound) on the quotient of a subtree's subtree's weight (or size) to the subtree's own weight (or size). In our tree, such quotient is always larger than $\frac{1-2k}{2-2k}$ when $0<k<0.5$, but when $k\geq 0.5$, such quotient is not bounded by a constant in $(0,1)$. Because we do not use a balance criterion, our tree's shape is flexible, just as in general balanced trees, and even the root of a not-so-small tree may have only one child node.

Hence, instead of comparing intrinsic matters, we compare potential matters. In the following, we will show that the quotient between upper bounds on sizes (instead of just sizes) is upper bounded by a constant. This also offers a (mostly) tighter upper bound on height.

First, note that when we insert a new node, the \emph{timer} gets ``set'' for the subtree rooted by it, and when we rebuild a subtree, the \emph{timer} gets ``reset'' for all subtrees of it. Based on a subtree's last \emph{timer} (re)set, if we refer to an insertion or deletion as an ``update'', we can divide a subtree into two cases.
\begin{lemma}
All subtrees are always in one of the following two cases.
\begin{enumerate}
    \item No updates happened to it since its last timer (re)set.
    \item At least one update happened to it since its last timer (re)set.
\end{enumerate}
\label{lemma:2cases}
\end{lemma}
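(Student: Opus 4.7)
My plan is to argue that the lemma is essentially immediate once one observes that every subtree currently present in the tree has had its timer (re)set at some point in its history. Given such a moment, the number of updates since then is either zero (case 1) or strictly positive (case 2), and these two possibilities are manifestly mutually exclusive and exhaustive. So the content of the lemma reduces to showing that a ``last timer (re)set'' event exists for every subtree.

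I would establish this by a straightforward induction on the sequence of operations performed on the tree. The base case is the empty tree, where there are no subtrees and the statement is vacuously true. For the inductive step, I would inspect the two events from Section~\ref{sec:operations} that set or reset a \emph{timer}: (a) when a successful insertion creates a new node, the specification explicitly sets its \emph{timer} to $1$, so the singleton subtree rooted at that new node immediately enters case 1; and (b) when a subtree is rebuilt, every node in the rebuilt region has its \emph{timer} reset, so every subtree whose root lies in this region transitions back into case 1. Since every node currently in the tree was originally created by some insertion, every subtree rooted at a node has had its timer set at least at the moment of that node's creation, which suffices.

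The one point I would take care to verify is that operations which do not create nodes also preserve the invariant. An insertion or deletion that merely passes through a subtree without changing its root leaves the last (re)set event untouched while incrementing the update count, thus either keeping the subtree in case 2 or moving it from case 1 to case 2. The only remaining structural modification is a two-child deletion via inorder-successor swap, which replaces the key at the affected node but preserves the node (and its \emph{timer}); since the algorithm treats this as an update to the enclosing subtrees, those subtrees land in case 2. I do not expect any genuine obstacle here; the induction is routine, and the main thing to spell out carefully is the enumeration of cases so that no structural change is overlooked.
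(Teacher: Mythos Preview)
Your argument is correct, but you are doing considerably more work than the paper does. In the paper this lemma is stated without any proof at all: it is treated as a tautology, since ``zero updates'' versus ``at least one update'' is an exhaustive dichotomy once the phrase ``since its last timer (re)set'' is meaningful. You have correctly identified that the only genuine content is the well-definedness of that phrase---i.e., that every node currently present has experienced at least one timer (re)set---and you prove this by induction on the operation sequence. That is a perfectly sound and more rigorous route; the paper simply takes it for granted, relying on the reader to observe from Section~\ref{sec:operations} that newly inserted nodes get $timer=1$ and that rebuilds reset every timer in the affected region.
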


We will often use this kind of division in the following. Note that since a subtree is perfectly balanced right after its \emph{timer} (re)set, we mostly only need to focus on subtrees in the second case. Also, note the following.

\begin{lemma}
If a subtree of size less than $\frac{2}{k}$ went through no updates since its last timer (re)set, then its timer gets reset after an update in it, assuming it is still non-empty.
\label{lemma:toosmallT}
\end{lemma}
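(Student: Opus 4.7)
The plan is to exploit the fact that for very small subtrees the reset value of the timer is forced down to $1$. Let $T$ be the subtree in question, let $r$ be its root, and let $n < 2/k$ be its size at the moment of its last timer (re)set. Since no update has occurred since then, $r.timer$ is still exactly the value assigned at the (re)set, namely $\max(1, \lfloor kn \rfloor)$. First I would just do the arithmetic: $n < 2/k$ gives $kn < 2$, hence $\lfloor kn \rfloor \leq 1$, hence $r.timer = 1$.

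Next I would argue that the single update inside $T$ decrements $r.timer$ from $1$ to $0$, provided $T$ stays non-empty (i.e.\ $r$ is still in the tree). For an insert this is immediate, since the recursive insert procedure of Algorithm~\ref{alg:insert} always returns through $r$ (the new leaf lies in $T$) and executes the marked block (a) at $r$. For a delete I would case-split on whether the removed node is $r$ itself: if the removed node lies strictly below $r$, or if $r$ has two children and its key is the one deleted (in which case $r.key$ is overwritten by the inorder successor but the node $r$ persists), the recursive delete of Algorithm~\ref{alg:delete} also returns through $r$ and executes block (a). The remaining case, where $r$ has zero or one children and its own key is deleted, physically removes $r$ from the tree; this is exactly the case excluded by the hypothesis that $T$ is still non-empty.

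Finally, with $r.timer$ now $0$, the selection rule given in Section~\ref{sec:operations} chooses the \emph{rebuildTarget} to be the ancestor of the inserted/deleted node of least depth whose timer has reached $0$. Since $r$ itself qualifies, the chosen target is either $r$ or a proper ancestor of $r$; in either case the forthcoming partial rebuild operates on a subtree containing $r$, and because a rebuild resets the timer of every node inside the rebuilt subtree, $r.timer$ is reset. That is the conclusion of the lemma.

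I expect the arithmetic in the first step and the appeal to the \emph{rebuildTarget} rule in the third step to be routine. The one step I would handle carefully is the delete case-split in the middle, where one has to be explicit about when the recursive delete actually returns through $r$ (so that block (a) fires) versus when it short-circuits and removes $r$; the ``still non-empty'' hypothesis is precisely what lets us discard the short-circuit case.
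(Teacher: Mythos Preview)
The paper states this lemma without proof, treating it as an immediate observation; your argument is correct and is exactly the natural one the paper is implicitly relying on (the $\max(1,\cdot)$ clause forces the timer to $1$, one update drives it to $0$, and the resulting rebuild---at $r$ or at an ancestor---resets $r$'s timer). Your delete case-split is more detailed than the paper bothers with, but it is accurate, and you correctly identify that the ``still non-empty'' hypothesis is precisely what rules out the short-circuit removal of $r$.
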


\begin{lemma}
For a perfectly balanced subtree of size $n$, the size of its subtree is no more than $\frac{n}{2}$, if exists.
\label{lemma:perfsize}
\end{lemma}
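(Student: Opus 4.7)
The plan is to invoke the definition of \emph{perfectly balanced} directly at the root of the subtree. Let $r$ be the root, and let $L$ and $R$ denote the sizes of $r$'s left and right subtrees, respectively, with a missing child contributing $0$. Then $L + R + 1 = n$, and the perfect-balance condition applied at $r$ gives $|L - R| \le 1$.

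Without loss of generality I assume $L \ge R$. Combining $L \le R + 1$ with $L + R = n - 1$ yields $2L \le (L + R) + 1 = n$, so $L \le n/2$; the smaller side $R$ is then trivially bounded by $L \le n/2$. This establishes the bound for whichever of the two immediate child subtrees actually exists, while the ``if exists'' proviso handles the degenerate $n = 1$ case where neither child is present.

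There is no real obstacle: the lemma is essentially a one-line consequence of the definition of perfect balance, so no induction or case analysis is needed beyond the single WLOG step. The only interpretive subtlety worth flagging is that ``its subtree'' must be read as a subtree rooted at an immediate child of $r$ rather than as the whole subtree itself (the latter reading would contradict the $n/2$ bound for all $n \ge 2$); under that natural reading the argument above is complete. Iterated application of this bound along a root-to-leaf path will later feed into the logarithmic-height analysis of freshly rebuilt subtrees.
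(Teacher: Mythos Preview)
Your argument is correct. The paper itself gives no proof for this lemma at all---it is stated as an immediate consequence of the definition of ``perfectly balanced'' and then used without further comment---so your one-line derivation from $L+R+1=n$ and $|L-R|\le 1$ is exactly the intended justification. Your remark about reading ``its subtree'' as a subtree rooted at an immediate child is also the right interpretation, consistent with how the paper applies the lemma in Theorem~\ref{thm:quotient}.
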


Using these, we can see the following.
\begin{theorem}
Starting from a subtree whose timer was (re)set, if its size cannot exceed $n$ by doing no more than $x$ any updates, then its subtree's size also cannot exceed $\left(\frac{1+2k}{2+2k}\right)n$ by doing no more than $x$ any updates.
\label{thm:quotient}
\end{theorem}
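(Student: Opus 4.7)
The plan is to bound $|T'|(t)$ at any time $t$ directly, by tracing its growth from the most recent rebuild of $T$. Let $m$ denote the size of $T$ immediately after the most recent rebuild of $T$ before time $t$ (or $m = n_0$ if no such rebuild has occurred since the initial reset). By Lemma~\ref{lemma:perfsize}, $T$ is perfectly balanced at that instant, so $|T'| \le m/2$ there, and writing $I$ for the number of insertions performed into $T'$ during the current epoch (since that moment) gives
\[
|T'|(t) \le \frac{m}{2} + I,
\]
since insertions into $T'$ are the only operations that raise $|T'|$.

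Next I would bound $I$ by two independent constraints. The first is a \emph{timer} bound: because no rebuild of $T$ has occurred in the current epoch, the number of updates in $T$ this epoch is strictly less than $T_m := \max(1,\lfloor km\rfloor)$, so $I \le T_m - 1 \le km$ (the degenerate case $m < 1/k$ is handled by Lemma~\ref{lemma:toosmallT}, which forces $I = 0$ and still fits the same bound). The second is a \emph{budget} bound coming from the hypothesis: since $|T|$ can reach $n_0 + x$ by pure insertions, the hypothesis forces $x \le n - n_0$; and reaching the current size $m$ from $n_0$ required at least $m - n_0$ updates in $T$, leaving at most $n - m$ updates for the current epoch, so $I \le n - m$. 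The opposing case $m < n_0$ is strictly suboptimal for the adversary---both the starting $|T'|$ and the remaining budget shrink---so it does not affect the worst case.

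Combining the two gives $I \le \min(km, n - m)$, and a short case split would finish: if $m \le n/(1+k)$, the timer bound binds and
\[
|T'|(t) \le \frac{m}{2} + km = \frac{(1+2k)m}{2} \le \frac{n}{1+k}\cdot\frac{1+2k}{2} = \frac{1+2k}{2+2k}\,n;
\]
otherwise $m > n/(1+k)$, the budget bound binds, and
\[
|T'|(t) \le \frac{m}{2} + (n - m) = n - \frac{m}{2} < n - \frac{n}{2+2k} = \frac{1+2k}{2+2k}\,n.
\]
The main obstacle I expect is the budget step: the timer bound by itself only gives $|T'| \le m(1+2k)/2$, which can exceed the target once $m$ is close to $n$, so the hypothesis on $n$ must be used essentially to couple past and current updates.
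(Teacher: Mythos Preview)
Your proof is correct and rests on the same three ingredients as the paper's: the perfect-balance bound $|T'|\le m/2+u$ from Lemma~\ref{lemma:perfsize}, the timer bound $u\le km$, and the hypothesis-derived budget bound $u\le n-m$. The paper packages these as a short contradiction (assume $m/2+u>\frac{1+2k}{2+2k}\,n$, combine with $n\ge m+u$ to force $u>km$), whereas you unfold them into a direct case split on which of $km$ and $n-m$ binds; your route to the budget bound via $x\le n-n_0$ and update-counting is a bit more roundabout than the paper's one-liner ``replace the last $u$ updates by insertions, so $m+u\le n$''---adopting that version would also spare you the $m<n_0$ hand-wave, since it works uniformly regardless of how $m$ compares to the initial size.
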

\begin{proof}
Suppose that it can exceed. Then, using Lemma~\ref{lemma:2cases}, if this happened in a subtree $T$'s subtree, we can express that this happened after doing $u$ updates (that did not include $T$'s \emph{timer} reset) starting from one of $T$'s \emph{timer} (re)sets.

Call $n_0$ the previous size of $T$ right after ``this'' \emph{timer} (re)set happened. Then, $0\leq u < \max(1, \lfloor kn_0 \rfloor)$, and because of Lemma~\ref{lemma:perfsize}, $\frac{n_0}{2}+u > \left(\frac{1+2k}{2+2k}\right)n$ must be true. Note that we must have $n\geq n_0+u$. (Otherwise, $T.size>n$ is possible.)

However, note the following.
\begin{enumerate}[(i)]
    \item If $n_0 < \frac{2}{k}$, then $u=0$, and we have
    \begin{equation*}
        \left(\frac{1+2k}{2+2k}\right)n \geq \left(\frac{1+2k}{2+2k}\right)n_0 \geq \frac{n_0}{2}.
    \end{equation*}
    \item If $n_0 \geq \frac{2}{k}$, then $u\leq \lfloor kn_0 \rfloor-1 \leq kn_0$.\\
    However, we said that
    \begin{equation*}
        \frac{n_0}{2}+u > \left(\frac{1+2k}{2+2k}\right) n \geq \left(\frac{1+2k}{2+2k}\right)(n_0+u),
    \end{equation*}
    and this means $u>kn_0$, which contradicts.
\end{enumerate}

Therefore, we conclude that it is impossible.
\end{proof}

Also, it is not hard to see that $\left(\frac{1+2k}{2+2k}\right)$ is the smallest constant that we can guarantee in Theorem~\ref{thm:quotient}. 

Now, we finally have the following.
\begin{theorem}
The height of a non-empty tree with size $n$ is no more than $\left\lfloor \log_{\frac{2+2k}{1+2k}}{\frac{n}{2-2k}} \right\rfloor+1$. That is, the height of the tree is $O(\log n)$.
\end{theorem}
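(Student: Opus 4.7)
The plan is to apply Theorem~\ref{thm:quotient} iteratively along a chain of subtrees from the root down to a deepest leaf. Let $T_0$ denote the root (with current size $n$) and $T_1,T_2,\ldots,T_h$ denote the subtrees rooted at the nodes on a path of length $h$ (the height) from the root to a deepest leaf. Setting $\alpha=\frac{1+2k}{2+2k}$, each application of the theorem shrinks the size upper bound by a factor of $\alpha$, which will force $h$ to be logarithmic in $n$.

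To extract the precise constant $\frac{1}{2-2k}$ appearing in the claim, I would handle the first step (depth $0$ to depth $1$) separately using Lemma~\ref{lemma:perfsize}. Let $n_0$ be the root's size at its last timer (re)set; then $|T_1|\leq n_0/2$ at that instant by Lemma~\ref{lemma:perfsize}. The timer bounds the number of updates since then by $<\max(1,\lfloor kn_0\rfloor)$, and since the current size can have dropped by at most that many deletions, one gets the reverse relation $n\geq n_0(1-k)$, i.e.\ $n_0\leq n/(1-k)$. Combining these yields $|T_1|\leq n/(2(1-k))=n/(2-2k)$. Iterating Theorem~\ref{thm:quotient} from depth $1$ downward then produces $|T_d|\leq\frac{n}{2-2k}\,\alpha^{d-1}$ for all $d\geq 1$.

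Finally, $T_h$ is non-empty, so $|T_h|\geq 1$. Plugging this into the depth-$h$ bound gives $1\leq\frac{n}{2-2k}\,\alpha^{h-1}$, i.e.\ $h-1\leq\log_{(2+2k)/(1+2k)}(n/(2-2k))$. Since $h$ is a non-negative integer, taking the floor and adding $1$ yields $h\leq\lfloor\log_{(2+2k)/(1+2k)}(n/(2-2k))\rfloor+1$, from which $h=O(\log n)$ is immediate.

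The delicate part is that Theorem~\ref{thm:quotient} controls worst-case subtree sizes over the update interval since a subtree's (re)set rather than the current sizes, and these intervals shorten as one descends the chain (a child subtree may have been (re)set strictly later than its parent). Carefully propagating the geometric bound along the chain, while exploiting Lemma~\ref{lemma:perfsize} at the top to sharpen the prefactor from the coarser $\frac{1+k}{1-k}$ that Theorem~\ref{thm:quotient} alone would yield down to the tight $\frac{1}{2-2k}$, is the main bookkeeping challenge.
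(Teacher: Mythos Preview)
Your overall plan---treat the first step (depth $0$ to depth $1$) separately to get the factor $1/(2-2k)$, then iterate Theorem~\ref{thm:quotient} to pick up one factor of $\alpha=(1+2k)/(2+2k)$ per level---is exactly the paper's approach. The gap is in your depth-$1$ computation.

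You write $|T_1|\le n_0/2$ ``at that instant'' (i.e.\ at the root's last reset) and separately $n_0\le n/(1-k)$, then combine to claim $|T_1|\le n/(2-2k)$. But the quantity you need to bound is the \emph{current} $|T_1|$, which after $u$ insertions can be as large as $n_0/2+u$, not $n_0/2$. Your bound $n\ge n_0(1-k)$, on the other hand, comes from the opposite extreme (all updates are deletions). You cannot combine a reset-time size with an all-deletion lower bound on $n$ and expect to control the current size. Concretely, plugging $|T_1|\le n_0/2+u$ into your estimate $n_0\le n/(1-k)$ would only give $|T_1|\le \frac{n(1+2k)}{2(1-k)}$, which is strictly worse than $n/(2-2k)$.

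The paper closes this gap by \emph{coupling} the two estimates through the same $u$: if $u$ of the at most $\lfloor kn_0\rfloor-1$ updates are insertions, then simultaneously $|T_1|\le n_0/2+u$ and $n\ge n_0-\lfloor kn_0\rfloor+2u+1$ (since at most $\lfloor kn_0\rfloor-1-u$ are deletions). The ratio $(n_0/2+u)/n$ is then maximized at $u=0$ and is strictly below $1/(2-2k)$. That coupling is the missing ingredient in your argument. (A minor further omission: the paper also splits off the Lemma~\ref{lemma:2cases} case where the whole tree saw no updates since its last reset, which has to be handled separately as the perfectly balanced case.)
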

\begin{proof}
We divide the tree into two cases using Lemma~\ref{lemma:2cases}.

A tree in the first case of Lemma~\ref{lemma:2cases} is perfectly balanced. That is, all subtrees inside it got their \emph{timer} (re)set, and the tree's height is no more than $\lfloor \log_{2}{n}\rfloor$.

Now, we check for non-empty trees in the second case of Lemma~\ref{lemma:2cases}. Call $n_0$ the previous size of the tree right after its last \emph{timer} reset, and note that $n_0\geq \frac{2}{k}$ by Lemma~\ref{lemma:toosmallT}. Also, note that we are considering trees that went through no more than $\lfloor kn_0 \rfloor-1$ any updates since its last \emph{timer} reset.

If the number of insertions the tree went through since its last \emph{timer} reset is $u$, then its left/right subtree's size cannot exceed $\frac{n_0}{2}+u$ ($\because$ Lemma~\ref{lemma:perfsize}), and we have $n \geq n_0-\lfloor kn_0 \rfloor+2u+1$. Then, we can easily see that the quotient between $\frac{n_0}{2}+u$ and $n$ is less than $\frac{1}{2-2k}$. Now, using Theorem~\ref{thm:quotient}, we can easily see that the size of a subtree rooted by a node at depth $d$ is always no more than
\begin{equation*}
    \left(\frac{1+2k}{2+2k}\right)^{d-1}\frac{n}{2-2k}.
\end{equation*}
Therefore, we conclude  that the height is no more than $\left\lfloor \log_{\frac{2+2k}{1+2k}}{\frac{n}{2-2k}} \right\rfloor+1$.
\end{proof}

Note that a different $k$ value leads to a different upper bound on height. In our tree, we can say that a smaller $k$ leads to more frequent rebuilds and a smaller height.

\subsection{Amortized costs of insert and delete operations}
\begin{theorem}
\label{thm:constUpper}
Assume that rebuilding a subtree $T$ costs no more than $T.size$. If a subtree's timer reached 0, then the quotient between its rebuild cost and $u$, the number of updates it went through since its last timer (re)set, is less than $\left(\frac{2}{k}+1\right)$. That is, the rebuild cost is less than $\left(\frac{2}{k}+1\right)u$.
\end{theorem}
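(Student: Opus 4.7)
My plan is to fix the subtree $T$ at the moment its timer reaches $0$, let $n_0$ denote the size of $T$ immediately after its most recent timer (re)set, and bound both the current size of $T$ and the update count $u$ in terms of $n_0$ and $k$. The timer was set to $\max(1,\lfloor kn_0\rfloor)$ at that last (re)set and decrements by exactly one on each successful update in $T$, so the timer reaching $0$ forces $u = \max(1,\lfloor kn_0\rfloor)$. Since each successful update changes $T.size$ by exactly one, the current size is at most $n_0 + u$, and by the hypothesis of the theorem this upper-bounds the rebuild cost. So the whole problem reduces to showing $(n_0 + u)/u < 2/k + 1$ under the constraint that $u = \max(1,\lfloor kn_0\rfloor)$.

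From here I would split into two cases mirroring the analysis in Theorem~\ref{thm:quotient}. If $n_0 < 2/k$, then $\lfloor kn_0 \rfloor \leq 1$, so $u = 1$, and
\begin{equation*}
\frac{n_0 + u}{u} \leq n_0 + 1 < \frac{2}{k} + 1.
\end{equation*}
If instead $n_0 \geq 2/k$, then $u = \lfloor kn_0 \rfloor \geq 2$, and $kn_0 < u + 1$ gives $n_0 < (u+1)/k$, so
\begin{equation*}
\frac{n_0 + u}{u} < \frac{1}{k} + \frac{1}{ku} + 1 \leq \frac{3}{2k} + 1 < \frac{2}{k} + 1.
\end{equation*}

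The argument is elementary given the setup; the only delicate point is handling the floor in the timer formula, particularly at the boundary $n_0 = 2/k$ where $\lfloor kn_0 \rfloor$ jumps from $1$ to $2$. The reason for taking the specific cut at $2/k$ is precisely that it guarantees $u \geq 2$ in the second case, which is exactly what is needed to push $1/(ku)$ down to at most $1/(2k)$ and thereby keep the bound strictly under $2/k + 1$. Everything else is routine algebra, so I do not anticipate any real obstacle beyond this bookkeeping.
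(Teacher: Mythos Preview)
Your proof is correct and follows essentially the same approach as the paper: set $u=\max(1,\lfloor kn_0\rfloor)$, bound the rebuild cost by $n_0+u$, and split into two cases depending on the size of $n_0$. The only cosmetic difference is that the paper cuts at $n_0<1/k$ versus $n_0\geq 1/k$ (so that $u=\lfloor kn_0\rfloor\geq 1$ in the second case and one obtains $n_0/\lfloor kn_0\rfloor<2/k$ directly), whereas you cut at $2/k$ to force $u\geq 2$; both choices work, and yours incidentally yields the slightly sharper $3/(2k)+1$ in the large case.
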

\begin{proof}
Call $n$ the size of the subtree we want to rebuild, and $n_0$ the size of it before it went through $u$ such updates. Then, $n \leq n_0+u$, and $u = \max\left(1, \left\lfloor kn_0 \right\rfloor \right)$. Now, we check two cases.
\begin{enumerate}[(i)]
    \item If $n_0<\frac{1}{k}$, then $u=1$ and $n < \frac{1}{k}+1$.
    \item If $n_0\geq\frac{1}{k}$, then $u=\left\lfloor kn_0 \right\rfloor$ and $n\leq n_0+\left\lfloor kn_0 \right\rfloor$.\\
    Then, for the quotient between the rebuild cost and $u$,
    \begin{equation*}
        \frac{n}{u} \leq \frac{n_0+\left\lfloor kn_0 \right\rfloor}{\left\lfloor kn_0 \right\rfloor} =\frac{n_0}{\left\lfloor kn_0 \right\rfloor}+1 < \frac{2}{k}+1.
    \end{equation*}
\end{enumerate}
Therefore, in both cases, the quotient is less than $\left(\frac{2}{k}+1\right)$.
\end{proof}

This means that if we add $\left(\frac{2}{k}+1\right)$ extra credit at all ancestor nodes of the inserted/deleted node after each insertion/deletion, a subtree's root will always have enough credit to pay the subtree's rebuild cost by the time its \emph{timer} reached 0. Since our tree's height is $O(\log n)$, so is the number of ancestor nodes of the inserted/deleted node, and therefore, we conclude with the following.
\begin{theorem}
Our tree can be maintained with $O(\log n)$ amortized cost per successful insert/delete operation, where $n$ is the size of the tree the operation happened.
\end{theorem}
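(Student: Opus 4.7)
The plan is to use the accounting (banker's) method for amortized analysis, along the lines sketched in the paragraph preceding the statement. I would charge each successful insert or delete an amortized cost of the traversal length plus $c := \frac{2}{k}+1$ credits deposited at every ancestor visited on that operation's path. Since the tree has height $O(\log n)$ by the previous subsection, the traversal cost is $O(\log n)$ and the total deposit per operation is at most $(\text{height}) \cdot c = O(\log n)$, so the amortized cost per operation is $O(\log n)$ provided every rebuild is paid for out of deposited credits.

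The core verification is that, whenever a rebuild fires at some node $v$, the credits sitting at $v$ suffice to cover the rebuild's actual cost. When $v$ becomes the \emph{rebuildTarget}, $v.timer$ has just reached $0$, which by construction means exactly $u := \max(1,\lfloor kn_0\rfloor)$ successful updates have occurred inside $v$'s subtree since $v$'s last timer (re)set, where $n_0$ was the subtree's size at that (re)set. Every one of those $u$ updates necessarily descended through $v$, so each one deposited $c$ credits at $v$, giving $v$ a balance of at least $uc$ at the moment of rebuild. Theorem~\ref{thm:constUpper} bounds the rebuild cost by $uc$, so the balance at $v$ pays for the rebuild, as required.

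The main obstacle is making sure the accounting is consistent across nested rebuild events, because credits deposited at $v$ must not be silently consumed or double-counted when some ancestor of $v$ triggers a rebuild first. The key observation is that each operation triggers at most one rebuild, chosen at the shallowest ancestor whose timer hit $0$, after which every timer inside that subtree (including $v$'s) is reset; in the bookkeeping I would simultaneously discharge all credits held by descendants of the \emph{rebuildTarget} (they were already covered by the actual deposits, even if unused). With this convention the balance at $v$ measured since $v$'s own last (re)set is exactly $c$ times the number of successful updates contained in $v$'s subtree in that window, matching the hypothesis of Theorem~\ref{thm:constUpper} and closing the amortized analysis.
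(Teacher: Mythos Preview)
Your proposal is correct and follows essentially the same accounting argument the paper gives: deposit $\frac{2}{k}+1$ credits at each ancestor per successful update, invoke Theorem~\ref{thm:constUpper} to show these credits cover any rebuild, and use the $O(\log n)$ height bound to limit the per-operation deposit. Your third paragraph's treatment of nested rebuilds (discharging descendant credits upon an ancestor's reset) is a welcome clarification that the paper leaves implicit, but it does not change the underlying method.
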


\section {Discussion of Simplicity}
\label{sec:discussions}
As previously mentioned, a balanced BST is a fundamental data structure, but unfortunately, it is also infamous for its cumbersome implementation. Note that nowadays, so many people from all kinds of fields need and use computer science. However, not only are common methods said to be too complex for average programmers, but they are also often neglected or replaced by poor methods.

Also, in many cases, people may not need cutting-edge performance but rather just want to quickly and simply implement their idea. Moreover, when performance is not a high priority, people may prefer the simplest method in many cases.

This strongly emphasizes why simple methods are important, just as much other previous research already did.

\subsection*{Our method}
Simplicity is one of the key virtues when implementing algorithms, and our method ultimately aimed for maximum simplicity. Our method offers the following.

First, our method does not use any balance criterion, so we can minimize the time we consume to understand a tree. Before implementing a tree, one must first thoroughly understand the tree to create a correct, error-less implementation. Not to mention this is the hardest part in most cases. When using a balance criterion, one would usually need to understand when we call a tree is ``unbalanced'', how to detect these, and specifically what to do to rebalance the tree. Instead, we use a highly intuitive method that can be briefly explained in one sentence, and we just need to remember that we will schedule rebuilds.

Second, our method can be implemented by adding only minimal code and a simple partial rebuilding algorithm to a naive BST. The code we add is also highly simple and explicit.

Third, we use the exact same method and extra code for both insert and delete operations. To cite Sen and Tarjan \cite{SenNoDel}, note that rebalances after deletions are generally more complicated than after insertions, and to make things worse, rebalances after deletions are often not even included in textbooks or databases. Though general balanced trees may be an exception, it still uses a whole different method for deletes.

Finally, we argue that the use of partial rebuilds offers a more abstract and simple method compared to tree rotations. When using rebalances that use tree rotations, we need to consider all cases of the tree, implement various types of complicated rotations, and determine the exact type needed for each case. Implementing all of this is error-prone and hard to do without additional reference or figures, especially for novice. Therefore, it is often criticized to be too complex. \cite{AnderssonAA, MunroDSL} On the other hand, partial rebuilding is a \emph{generalized} algorithm. In all cases, all we need to do is designate the rebuild target. Not to mention that rebuilds itself also can be implemented by only using simple algorithms.


\section{Conclusion}
More than simplifying a balance criterion, it was shown that we can maintain a balanced BST without using any balance criterion. Ultimately, this offers a simple and explicit method that only needs to add minimal code and a simple partial-rebuilding algorithm to a naive BST. For novice or when amortized logarithmic costs are enough, our method will be suitable as a simple solution.

\bibliographystyle{abbrv}
\bibliography{ref.bib}

\begin{thebibliography}{10}

\bibitem{Adelson-VelskyAVL}
G.~Adelson-Velsky and E.~Landis.
\newblock An algorithm for the organization of information.
\newblock {\em Dokladi Akademia Nauk SSSR}, 146(2):1259--1262, 1962.

\bibitem{AnderssonAA}
A.~Andersson.
\newblock Balanced search trees made simple.
\newblock In {\em Proceedings of the Third Workshop on Algorithms and Data
  Structures}, WADS '93, pages 60--71, Berlin, Heidelberg, 1993.
  Springer-Verlag.

\bibitem{AnderssonGBT}
A.~Andersson.
\newblock General balanced trees.
\newblock {\em J. Algorithms}, 30(1):1--18, Jan. 1999.

\bibitem{BayerBB}
R.~Bayer.
\newblock Binary b-trees for virtual memory.
\newblock In {\em Proceedings of the 1971 ACM SIGFIDET Workshop on Data
  Description, Access and Control}, SIGFIDET '71, pages 219--235, New York, NY,
  USA, 1971. ACM.

\bibitem{BentleyMultiBST}
J.~L. Bentley.
\newblock Multidimensional binary search trees used for associative searching.
\newblock {\em Commun. ACM}, 18(9):509--517, Sept. 1975.

\bibitem{Chang84}
H.~Chang and S.~S. Iyangar.
\newblock Efficient algorithms to globally balance a binary search tree.
\newblock {\em Communications of the ACM}, 27(7):695--702, July 1984.

\bibitem{Martin72}
W.~A. Martin and D.~N. Ness.
\newblock Optimizing binary trees grown with a sorting algorithm.
\newblock {\em Communications of the ACM}, 15(2):88--93, Feb. 1972.

\bibitem{MunroDSL}
J.~I. Munro, T.~Papadakis, and R.~Sedgewick.
\newblock Deterministic skip lists.
\newblock In {\em Proceedings of the Third Annual ACM-SIAM Symposium on
  Discrete Algorithms}, SODA '92, pages 367--375, Philadelphia, PA, USA, 1992.
  Society for Industrial and Applied Mathematics.

\bibitem{NievergeltWB}
J.~Nievergelt and E.~M. Reingold.
\newblock Binary search trees of bounded balance.
\newblock In {\em Proceedings of the Fourth Annual ACM Symposium on Theory of
  Computing}, STOC '72, pages 137--142, New York, NY, USA, 1972. ACM.

\bibitem{Overmars:1987:DDD:535500}
M.~H. Overmars.
\newblock {\em Design of Dynamic Data Structures}.
\newblock Springer-Verlag, Berlin, Heidelberg, 1987.

\bibitem{Overmars:1982:DMD:2697726.2697937}
M.~H. Overmars and J.~Leeuwen.
\newblock Dynamic multi-dimensional data structures based on quad- and k--d
  trees.
\newblock {\em Acta Inf.}, 17(3):267--285, Aug. 1982.

\bibitem{SenNoDel}
S.~Sen and R.~E. Tarjan.
\newblock Deletion without rebalancing in balanced binary trees.
\newblock In {\em Proceedings of the Twenty-first Annual ACM-SIAM Symposium on
  Discrete Algorithms}, SODA '10, pages 1490--1499, Philadelphia, PA, USA,
  2010. Society for Industrial and Applied Mathematics.

\bibitem{Stout86}
Q.~F. Stout and B.~L. Warren.
\newblock Tree rebalancing in optimal time and space.
\newblock {\em Commun. ACM}, 29(9):902--908, Sept. 1986.

\end{thebibliography}

\end{document}